\newtheorem{mydef}{\bf Definition}
\newtheorem{mythm}{\bf Theorem}
\newtheorem{myprob}{\bf Problem}
\newtheorem{mylem}{\bf Lemma}
\newtheorem{mypro}{\bf Proposition}
\newtheorem{myexm}{\bf Example}
\newtheorem{remark}{Remark}
\title{\LARGE \bf
On the Computation and Approximation of Backward Reachable Sets  \\
for Max-Plus Linear Systems using Polyhedras
}
\author{Yuda Li, Shaoyuan Li and Xiang Yin% 
\thanks{This work was supported by  the National Natural Science Foundation of China (62061136004, 62173226, 61833012).}
	\thanks{Yuda Li, Shaoyuan Li and Xiang Yin are with Department of Automation and Key Laboratory of System Control and Information Processing, Shanghai Jiao Tong University, Shanghai 200240, China.
	{\tt\small \{yuda.li, syli, yinxiang\}@sjtu.edu.cn}.}
}
\begin{document}

\renewcommand{\algorithmicrequire}{\textbf{Input:}}
\renewcommand{\algorithmicensure}{\textbf{Output:}}
\def\XY#1{{\textcolor{red}{ {\bf XY:} #1}}}
\def\CY#1{{\textcolor{blue}{ {\bf CY:} #1}}}
\def\zn#1{{\textcolor{blue}{#1}}}
\newcommand{\trash}[1]{%
    \stepcounter{Trash} % Increment the counter
    \textcolor{blue}{\#\theWSQcomment\ \textbf{#1}} % Add the number and color the comment
}
\maketitle
\thispagestyle{empty}
\pagestyle{empty}

%%%%%%%%%%%%%%%%%%%%%%%%%%%%%%%%%%%%%%%%%%%%%%%%%%%%%%%%%%%%%%%%%%%%%%%%%%%%%%%%
\begin{abstract}
This paper investigates reachability analysis for max-plus linear systems (MPLS), an important class of dynamical systems that model synchronization and delay phenomena in timed discrete-event systems.
We specifically focus on backward reachability analysis, i.e., determining the set of states that can reach a given target set within a certain number of steps. Computing backward reachable sets presents significant challenges due to the non-convexity of max-plus dynamics and the complexity of set complement operations.
To address these challenges, we propose a novel approximation framework that efficiently computes backward reachable sets by exploiting the structure of tropical polyhedra. Our approach reformulates the problem as a sequence of symbolic operations and approximates non-convex target sets through closure operations on unions of tropical polyhedra. We develop a systematic algorithm that constructs both outer (M-form) and inner (V-form) representations of the resulting sets, incorporating extremal filtering to reduce computational complexity. 
The proposed method offers a scalable alternative to traditional DBM-based approaches, enabling reliable approximate backward reachability analysis for general target regions in MPLS.
\end{abstract}

%%%%%%%%%%%%%%%%%%%%%%%%%%%%%%%%%%%%%%%%%%%%%%%%%%%%%%%%%%%%%%%%%%%%%%%%%%%%%%%%
\section{Introduction}
Max-plus linear systems (MPLS) are a class of discrete-event systems that operate over the max-plus algebra, where the addition operation is replaced by the maximum and the multiplication operation is replaced by conventional addition \cite{hardouin2018control,de2020analysis}. It captures the synchronization and delay phenomena in timed discrete-event systems and is related to timed event graphs \cite{amari2011max,he2021performance}, a subclass of Petri nets where each place has exactly one input and one output transition. 
These systems are particularly useful for modeling and analyzing scheduling problems \cite{he2016cycle}, manufacturing systems \cite{hardouin2011towards,imaev2008hierarchial,van2020model}, and transportation networks \cite{kistosil2018generalized,kersbergen2016towards,outafraout2020control}, where the focus is on determining the occurrence times of events under precedence constraints. 

In the context of MPLS analysis and control, reachability analysis is one of the most fundamental problems. For instance, given the current state of the system, one may need to compute the set of states reachable at future time instants, either under a specific control sequence or all possible inputs. This is known as the \emph{forward reachability} problem, which frequently arises in applications such as model predictive control \cite{xu2016optimistic,xu2018model}.  
Conversely, some applications require computing the set of states that can reach a given target state or region within a certain number of steps. This is referred to as the \emph{backward reachability} problem and is particularly relevant in safety analysis, where identifying invariant sets is critical for ensuring system stability and performance \cite{soudjani2016formal,pratama2018safety}. Both forward and backward reachability analyses offer essential insights into system behavior, enabling the design of effective control strategies and ensuring compliance with safety constraints.

In the literature, many existing works have studied reachability analysis for MPLS. 
For example, in \cite{4099499}, the authors extended the concept of geometric invariance from conventional linear systems to the max-plus setting. They characterized invariant semimodules over an infinite horizon, providing valuable insights into the asymptotic behavior of MPLS. 
However, this approach primarily addresses infinite-horizon invariance and cannot directly compute reachable sets at finite time steps.
For finite-horizon reachability computation, several alternative methods have been proposed. For instance, \cite{adzkiya2014forward,mufid2021smt,adzkiya2013finite} leverage difference-bound matrices (DBMs) and piecewise affine systems to partition the state space into multiple regions, computing local reachable sets within each region. However, since DBMs divide the state space into exponentially many regions, the computational complexity grows prohibitively high, particularly as the system dimension increases.

More recently, the authors in \cite{allamigeon2013computing} proposed a novel characterization of a class of sets in MPLS called tropical polyhedra, demonstrating that they can be efficiently represented using a generating set of basic vectors. Building on this result, \cite{espindola2022stochastic,espindola2025set} investigated reachability set computation via tropical polyhedra. Specifically, they investigated 
(i) the forward reachability problem for polyhedra-represented state sets, and
(ii) the backward reachability problem from a single point, which is useful for state estimation within their framework. 
However, it still remains an open question how to compute backward reachability from a general set represented by polyhedra. This problem is particularly important in applications such as safety analysis, where it is necessary to propagate the safety region rather than focusing solely on a single point.

In this paper, we address the challenging problem of computing backward reachable sets in max-plus linear systems.
This problem is fundamentally more difficult than the forward reachability problem studied in \cite{espindola2025set}, as one needs to consider all possible pre-images of a target set (instead of a point) under max-plus dynamics, which often involves solving non-convex set-valued equations and handling set complements.
Specifically, we tackle this challenge by developing a novel computational framework that efficiently approximates backward reachable sets through a combination of set operations and symbolic propagation techniques. Our approach goes beyond the basic setting by allowing complement operators to represent target sets. However, in such cases, explicitly computing the backward reachable set becomes significantly more difficult, as the boundary of the set may become highly irregular or non-polyhedral.
To address this, we further propose a new technique based on tropical polyhedras, which allows us to efficiently approximate generalized reachable sets that involve complements or unions of basic sets.
In summary, our method provides a computationally efficient and scalable approach to approximately computing backward reachable sets in max-plus linear systems, extending beyond traditional polyhedral representations and supporting a broader class of target sets.

The rest of the paper is organized as follows.
In Section \ref{sec: preliminary}, we recall the definitions of max-plus linear systems, tropical cones, and polyhedra, which are necessary for later use.
In Section \ref{sec: problem formulation}, we characterize the MPL system and define the backward operator.
In Section \ref{sec: Basic ideas}, we introduce the framework of our method.
In Section \ref{sec: approximation method}, we provide the details of how to approximate sets using tropical polyhedra and prove the soundness of this approach.
In Section \ref{sec: reachability}, we demonstrate how to compute the backward reachable set using the approximation method introduced in the previous section.
In Section \ref{sec: Improvement}, we discuss how to improve the efficiency of our algorithm for set approximation. 
Finally, we conclude the paper in Section~\ref{sec:conclu}.
%To be more general, the safety region to be considered can be defined by Boolean operators. We define half-spaces on $\mathbb{R}_{\max}^5$ to be of the form 
%\[\mathscr{H} = \{x \in \mathbb{R}_{\max}^5\mid \max_i(a_i + x_i)\le \max_i(b_i+x_i)\}\]
%Similar to the way of defining logic assertion, we consider safety set defined recursively in the following way
%\[S  = \emptyset \mid \mathscr{H} \mid S^c \mid S_1 \cup S_2\]
% The equation \ref{safety region} can be defined in the previous method since its an intersection of finitely many half-spaces. Defining sets using such a recursive method may help us consider more complicated task and enable the system to be solved under some logical specifications such as STL and LTL.

\section{Preliminary}\label{sec: preliminary}
In this section, we review some basic concepts in max-plus algebra that we will use in the latter developments. 
\subsection{Dioid and Max-Plus Algebra}
A dioid is a set $\mathcal{M}$ together with two binary operations $\otimes:\mathcal{M} \times \mathcal{M} \to \mathcal{M}$ and $\oplus:\mathcal{M} \times \mathcal{M} \to \mathcal{M}$ such that 
$\oplus$ is commutative and idempotent ($a \oplus a = a$), 
and 
$\otimes$ is distributive over $\oplus$. 
There are two elements $\varepsilon$ and $e$ such that $a \oplus \varepsilon = \varepsilon \oplus a = a$ and $a \otimes e = e \otimes a = a$. 
A max-plus algebra is a special kind of dioid with 
$\mathcal{D} = \mathbb{R}_{\max} = \mathbb{R} \cup \{-\infty\}$ or $\mathcal{D} = \mathbb{Z}_{\max} = \mathbb{Z} \cup \{-\infty\}$, together with two binary operations $\oplus: \mathcal{D} \times \mathcal{D} \to \mathcal{D}$ and $\otimes: \mathcal{D} \times \mathcal{D} \to \mathcal{D}$ defined by $a \oplus b = \max(a,b)$ and $a \otimes b = a + b$. 
For any $a \in \mathcal{D}$, we define $(-\infty) \oplus a = a \oplus (-\infty) = a$ and $(-\infty) \otimes a = a \otimes (-\infty) = -\infty$. For simplicity, hereafter we denote $-\infty$ as $\varepsilon$. On $\mathcal{D}$, we can define a metric given by $d(x,y) = |\exp(x) - \exp(y)|$, with the convention that $\exp(\varepsilon) = 0$. The closed sets and open sets are defined with respect to the topology generated by this metric.

\subsection{Semimodules and Sub-Semimodules on $\mathcal{D}$}
The Cartesian product of $n$ copies of $\mathcal{D}$, denoted as $\mathcal{D}^n$, can naturally be equipped with the operation 
$\oplus:\mathcal{D}^n \times \mathcal{D}^n \to \mathcal{D}^n$ of vector addition defined by $(a_1,\dots,a_n)\oplus(b_1,\dots,b_n) = (a_1\oplus b_1,\dots,a_n\oplus b_n)$, and the operation of scalar multiplication $\cdot:\mathcal{D} \times \mathcal{D}^n \to \mathcal{D}^n$ defined by $a \cdot (b_1,\dots,b_n) = (a \otimes b_1,\dots,a \otimes b_n)$. The tuple $(\mathcal{D}^n,\oplus,\cdot)$ is called a max-plus $\mathcal{D}$-semimodule or simply a semimodule. It is also equipped with a natural partial order defined by $x \preceq y \leftrightarrow \forall 1 \le i \le n,\ x_i \le y_i$. The element $\varepsilon_n = (\varepsilon,\dots,\varepsilon) \in \mathcal{D}^n$ is the minimal element. A matrix $A$ of $\mathcal{D}^{n\times m}$ is associated with the linear map $\phi_A:\mathcal{D}^m \to \mathcal{D}^n$ defined by $\phi_A(x)_i = \bigoplus_{j=1}^m A_{ij} \otimes x_j$ for $x = (x_1,\dots,x_m)$. For the rest of the paper, we denote $\phi_A(x)$ simply as $A \otimes x$. Similarly, for $a,b \in \mathcal{D}^n$, the scalar product of $a,b$ is defined as $(a|b) = \bigoplus_{i=1}^n a_i \otimes b_i$. A subsemimodule is a subset $\mathcal{S} \subseteq \mathcal{D}^n$ such that $\mu \cdot x \oplus \lambda \cdot y \in \mathcal{S}$ for all $\mu,\lambda \in \mathcal{D}$. For a vector $x \in \mathcal{D}^n$, we define the support of $x$ as $\mathrm{Supp}(x) = \{i \in [n] \mid x_i \neq \varepsilon\}$, where $[n]$ denotes the set $\{1,\dots,n\}$. For two vectors $x_1,x_2 \in \mathcal{D}^n$, we have $\mathrm{Supp}(x_1 \oplus x_2) = \mathrm{Supp}(x_1) \cup \mathrm{Supp}(x_2)$. It is thus possible to define the support of a subsemimodule: Let $S$ be a subsemimodule, we define the support of $S$, denoted as $\mathrm{Supp}(S)$, as the largest possible support of a vector in $S$, i.e., $\mathrm{Supp}(S) = \bigcup_{x \in S} \mathrm{Supp}(x)$. Similarly, the topology on $\mathcal{D}^n$ is the product topology; open sets and closed sets are defined with respect to this topology.

\subsection{Tropical Cones, Half-Spaces and Polyhedras}
A tropical half-space (resp. affine half-space) $\mathscr{H}$ is a subset of $\mathcal{D}^n$ of the form $\mathscr{H} = \{x \in \mathcal{D}^n \mid (a|x) \le (b|x)\}$ (resp. $\mathscr{H} = \{x \in \mathcal{D}^n \mid (a|x) \oplus c \le (b|x) \oplus d\}$), where $a, b \in \mathcal{D}^n$, $c, d \in \mathcal{D}$. A tropical cone (resp. a tropical polyhedra) is an intersection of finitely many half-spaces (resp. affine half-spaces), or equivalently, a set of the form $\mathscr{C} = \{x \in \mathcal{D}^n \mid A \otimes x \le B \otimes x\}$ (resp. $\mathscr{P} = \{x \in \mathcal{D}^n \mid A \otimes x \oplus c \le B \otimes x \oplus d\}$) for $A, B \in \mathcal{D}^{q \times n}$ and $c, d \in \mathcal{D}^n$, which we denote for simplicity as $\mathscr{C} = \langle A, B \rangle$ (resp. $\mathscr{P} = \langle (A, c), (B, d) \rangle$). We call this kind of representation an outer representation, or $\mathcal{M}$-form of a tropical cone (or tropical polyhedra). It has been proved by \cite{butkovivc2010max} that every tropical cone $\mathscr{C}$ can be generated by a finite set, i.e.,
\begin{equation}\label{V form of cone}
        \exists v_1, \dots, v_N \in \mathscr{C},\ \mathscr{C} = \left\{\left.\bigoplus_{i=1}^N \lambda_i \cdot v_i \right| \lambda_1, \dots, \lambda_N \in \mathcal{D}\right\}.
\end{equation}
We call $\{v_1, \dots, v_N\}$ the inner representation of $\mathscr{C}$, or its $\mathcal{V}$-form, and denote $\mathscr{C}$ as $\mathrm{Span}(\{v_1, \dots, v_N\})$. Similarly, a tropical polyhedra has an equivalent representation
\begin{equation}\label{V form of polyhedra}
    \left\{\left.\left(\bigoplus_{i=1}^N \lambda_i \cdot v_i\right) \oplus \left(\bigoplus_{j=1}^M \mu_j \cdot e_j\right) \right| \lambda_i, \mu_j \in \mathcal{D},\ \bigoplus_{j=1}^M \mu_j = 0\right\},
\end{equation}
where $v_i, e_j$ are vectors in $\mathcal{D}^n$. We denote by $(\{v_i\}_i, \{e_j\}_j)$ the inner representation ($\mathcal{V}$-form) of a tropical polyhedra. An algorithm for computing the generating set of a tropical cone in its $\mathcal{M}$-form is provided in \cite{allamigeon2013computing}. Reciprocally, given a set defined by formula \eqref{V form of cone} (resp. by formula \eqref{V form of polyhedra}), it has been shown in \cite{gaubert2011minimal} that it is also a tropical cone (resp. tropical polyhedra). A method of calculating the $\mathcal{V}$-form for the tropical cone is provided in \cite{gaubert2011minimal}.

We define some notations: for a vector $v \in \mathcal{D}^n$, we denote by $v^{(\le k)}$, $v^{(\ge k)}$, and $v^{(\ge k,\ \le m)}$ the vectors formed by the first $k$ coordinates, the last $n-k+1$ coordinates, and the coordinates with indices ranging from $k$ to $m$ of the original vector $v$, respectively. For a matrix $A$, we denote by $A^{(\le k)}$, $A^{(\ge k)}$, and $A^{(\ge k, \le m)}$ the matrices formed by the first $k$ columns, the last $n-k+1$ columns, and the columns with indices ranging from $k$ to $m$ of the original matrix $A$, respectively. We denote by $A|^{(\le k),(\le m)}$ the matrix formed by the first $k$ columns and the first $m$ rows of $A$.

\subsection{Difference Bounds Matrices}
Define the regular algebra as the set $\mathcal{R} = \mathbb{R}\cup\{-\infty, \infty\}\times \{<,\le\}$ together with a linear ordre defined by $(x_1,\sim_1) \preceq (x_2, \sim_2)$ iff $x_1 > x_2$ or $x_1 = x_2$ and $(\sim_1,\sim_2) \in \{(<,<),(\le,<),(\le,\le)\}$. It is equipped with the max-plus algebraic operator $\oplus$ defined by $r_1 \oplus r_2 = \max(r_1,r_2)$ for $r_1,r_2 \in \mathcal{R}$, and $\otimes$ is defined by $(x_1,\sim_1)\otimes (x_2,\sim_2) = (x_1+x_2,\sim_1\otimes \sim_2)$, with $\le \otimes \le = \le$, $\le \otimes < = <$, $< \otimes < = <$. The neutral element for $\oplus$ is $(\infty,\le)$, and that for $\otimes$ is $(0,\le)$
\begin{mydef}
    A difference bounds matrix (DBM) is a matrix with coefficient in $\mathcal{R}\setminus \{(-\infty,<)\}$. For the max-plus semimodule $\mathcal{D}^n$ and $M \in \mathcal{R}^{n\times n}$ a DBM, define the region associated to $M$ by
    \[\mathscr{R}(M)  = \bigcap_{1\le i,j \le n}\left\{x \in \mathcal{D}^n\left| 
    \begin{aligned}
        &x_i \sim_{i,j} m_{i,j}+x_j\\ 
        &\text{where }(m_{i,j},\sim_{i,j}) = M_{i,j} 
    \end{aligned}
    \right\}\right.\]
\end{mydef}
with the convention $\infty + (-\infty) = \infty$. Under this notation, the intersection of sets of DBM's form is easy to be expressed: $\mathscr{R}(M_1)\cap \mathscr{R}(M_2) = \mathscr{R}(M_1\oplus M_2)$. In the suite, we call for simplicity a set a DBM by meaning that it is the region associated to some DBM. We say a DBM $M$ is closed if it only has $\le$ in its expression, open if only $<$ in its expression.

The Kleene star operator $*$ is defined as $M^* = \bigoplus_{k = 0}^\infty M ^{\otimes k}$, where $M^{\otimes k}$ is the $k$-th power of the matrix $M$ under multiplication of max-plus sense, and $M^{\otimes 0}$ is the identity matrix with $(0,\le)$ on the diagonal and $(\infty,\le)$ elsewhere. We can prove that $\mathscr{R}(M) = \mathscr{R}(M^*)$, the matrix $M^*$ is called the canonical form of $M$, and the region $\mathscr{R}(M)$ is non empty iff all coefficients on the diagonal of $M^*$ equal to $(0,\le)$. Let $\mathbb{P}^{n}_k:\mathcal{D}^n \to \mathcal{D}^k$ be the projector on the first $k$ coordinates, then we have $\mathbb{P}_k^n(\mathscr{R}(M^*)) = \mathscr{R}\left(M^*|^{(\le k),(\le k)}\right)$.

There is some correlation between DBMs, tropical polyhedras and tropical cones. It can be showed that a closed DBM is a tropical cone (of the form $\langle A,\mathrm{Id} \rangle$) and every tropical cone is a finite union of closed DBMs. However, the number of DBMs needed to express a tropical cone grows exponentially in terms of the number of inequalities needed to define it. We can also set up a link between tropical cones and tropical polyhedras, as described in the following lemma:
\begin{mylem}\label{instection with Dn}
    Let $\mathcal{P} \subseteq \mathcal{D}^n$ be a tropical polyhedra, then there exists $\mathcal{C} \subseteq \mathcal{D}^{n+1}$ a tropical cone such that 
    \[\mathcal{P} = \{(x_1,\dots,x_n)\in \mathcal{D}^n\mid (0,x_1,\dots,x_n) \in \mathcal{C}\}\]
    which we denote abusively as $\mathcal{P} = \mathcal{C} \cap \mathcal{D}^n$.
\end{mylem}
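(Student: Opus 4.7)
The plan is to use the standard homogenization trick: the affine half-spaces defining $\mathcal{P}$ differ from linear half-spaces (cones) only by additive constants, which can be absorbed into a new coordinate $x_0$ that is then pinned to the multiplicative identity $e = 0$.

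First, I would invoke the $\mathcal{M}$-form of $\mathcal{P}$ given in the preliminaries: there exist $A, B \in \mathcal{D}^{q \times n}$ and $c, d \in \mathcal{D}^{q}$ such that
\[
\mathcal{P} = \{x \in \mathcal{D}^n \mid A \otimes x \oplus c \le B \otimes x \oplus d\}.
\]
Next, I would build augmented matrices $\widetilde{A}, \widetilde{B} \in \mathcal{D}^{q \times (n+1)}$ by prepending the constant vectors $c$ and $d$ respectively as the $0$-th column, i.e., $\widetilde{A}_{i,0} = c_i$, $\widetilde{B}_{i,0} = d_i$, and $\widetilde{A}_{i,j} = A_{i,j}$, $\widetilde{B}_{i,j} = B_{i,j}$ for $j \ge 1$. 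Then I would define
\[
\mathcal{C} = \{(x_0, x_1, \dots, x_n) \in \mathcal{D}^{n+1} \mid \widetilde{A} \otimes (x_0, x_1, \dots, x_n)^T \le \widetilde{B} \otimes (x_0, x_1, \dots, x_n)^T\},
\]
which is a tropical cone by the definition recalled in Section~\ref{sec: preliminary}.

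Finally, I would verify the equivalence by direct substitution. Setting $x_0 = 0 = e$, the $i$-th component of $\widetilde{A}\otimes (0,x_1,\dots,x_n)^T$ evaluates to $(c_i \otimes 0) \oplus \bigoplus_{j=1}^n A_{ij} \otimes x_j = c_i \oplus (A\otimes x)_i$, and analogously for $\widetilde{B}$. Hence $(0, x_1, \dots, x_n) \in \mathcal{C}$ if and only if $A \otimes x \oplus c \le B \otimes x \oplus d$, i.e., iff $x \in \mathcal{P}$, which is exactly the desired identity.

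There is essentially no hard step here; the lemma is purely a bookkeeping observation about homogenization. The only point requiring a bit of care is the max-plus arithmetic with the absorbing element $\varepsilon$: one must check that $c_i \otimes 0 = c_i$ holds uniformly whether $c_i$ is finite or equal to $\varepsilon$, which follows from the conventions fixed at the start of Section~\ref{sec: preliminary} ($\varepsilon \otimes a = \varepsilon$ and $e$ neutral for $\otimes$). Once this is noted, the equivalence of the two set descriptions is immediate.
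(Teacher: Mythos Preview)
Your proposal is correct and takes essentially the same approach as the paper. The paper simply writes the cone as $\mathcal{C} = \{(\lambda,x) \in \mathcal{D}\times \mathcal{D}^n \mid A\otimes x \oplus \lambda\cdot c \preceq B \otimes x \oplus \lambda \cdot d\}$ and declares the conclusion immediate, which is exactly your augmented-matrix homogenization stated in slightly different notation.
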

\begin{proof}
    Consider the set 
    \[\mathcal{C} = \{(\lambda,x) \in \mathcal{D}\times \mathcal{D}^n \mid A\otimes x \oplus \lambda\cdot c  \preceq B \otimes x \oplus \lambda \cdot d\}\]
    which is a tropical cone in $\mathcal{D}^{n+1}$. It is immediate that $\mathcal{P} = \mathcal{C} \cap \mathcal{D}^n$
\end{proof}

This lemma states that a tropical polyhedra can be represented by a tropical cone, for a tropical polyhedra in $\mathcal{D}^n$, we can represent it with two matrices $A,B \in \mathcal{D}^{q\times(n+1)}$ ($\mathcal{M}$-form), or its generating set $\{v_i\}_i \subseteq \mathcal{D}^{n+1}$. Some calculation concerning polyhedras can thus be passed to that concerning tropical cones. 
In the same way, we can define a more generalized form of DBMs, which we call an affine DBM, that are sets of the form $\mathscr{R}(M)\cap \mathcal{D}^n$ with $M \in \mathcal{R}^{(n+1)\times (n+1)}$. As tropical cones are union of finitely many closed DBMs, it is immediate that tropical polyhedras are union of finitely many closed affine DBMs.

\section{Problem Formulation}\label{sec: problem formulation}
In the rest of this paper,  the dioid $\mathcal{D}$ is supposed to be $\mathbb{R}_{\max}$. For clarity, we denote the natural order on the semimodule $\mathcal{D}^n$ or $\mathcal{D}^m$ by $\preceq$, and the order on $\mathcal{D}$ by $\le$.

We consider a max-plus linear system defined by
\begin{equation}\label{deterministic system}
    x_k = A \otimes x_{k-1}\oplus B \otimes u_k,
\end{equation}
where 
$x_{k}\in \mathcal{D}^{n}$ is the state at time $k$, $u_{k}\in\mathcal{U}\subseteq\mathcal{D}^m$ is the control input at time $k$ with $\mathcal{U}$ be the set of feasible control values, 
$A \in \mathcal{D}^{n\times n}$ and $B \in \mathcal{D}^{n\times m}$ are two matrices with coefficients in $\mathcal{D}$. 

Let  $E\subseteq \mathcal{D}^n$ be a region. 
The \textbf{one-step backward reachable set} of $E$ is defined by
\[\Upsilon_\mathcal{U}^{A,B}(E) = \{x\in \mathcal{D}^n\mid \exists u \in \mathcal{U}, A\otimes x \oplus B \otimes u \in E\}\]

In this paper, we are interested in the calculation of the reachable set of a special type of set $E$. 
Denote $\mathscr{Z}_n\subseteq 2^{\mathcal{D}^n}$ as subset of $\mathcal{D}^n$ defined recursively in the following way:
\[
S = \emptyset \mid \mathscr{H} \mid S^c \mid S_1\cup S_2 \mid S_1\cap S_2,
\]
where $\mathscr{H}$ denote an affine half-space of $\mathscr{H}$.
The our problem is formulated as follows. 

\begin{myprob}\label{main problem}
Let $\mathscr{S}$ denote the system described in \eqref{deterministic system}  with the control region $\mathcal{U}\in \mathscr{Z}_m$. 
Given $S \in \mathscr{Z}_n$ as the target set, 
determine the region 
$K = \Upsilon^{A,B}_\mathcal{U}(S)$ (resp. $K^{(N)} = (\Upsilon^{A,B}_\mathcal{U})^N(S)$) as the backward reachable set in one step (resp. $N$ step) of $S$.
\end{myprob}

\begin{remark}It is noteworthy that $\mathcal{S}$ can alternatively be defined as the union of finitely many affined DBMs (not necessarily closed), and every affined DBM can be recursively expressed in the aforementioned form. Nevertheless, the recursive definition is retained herein to ensure clarity in subsequent discussions.
\end{remark}

Note that our problem considered can be solved by the method of DBMs as mentioned in \cite{adzkiya2014forward}. Take, for example, the calculation of the set $\Upsilon_\mathcal{U}^{A,B}(S)$, the major steps are the following:
\begin{enumerate}
    \item Reform $S$ into a union of finitely many affined DBMs: $S = \bigcup_i X_i$
    \item Reform the system into an autonomous system $x_k = F\otimes y_{k-1}$, where $F = [A,B]$, $y_{k-1} = [x_{k-1}^T,u_k^T]^T$
    \item Compute the PWA system for $F$
    \item Compute the inverse image of each affined DBM by $F$ with the help of the PWA system, denoted as $F^{-1}(X_i)$, which is a finite union of affined DBMs in $\mathcal{D}^{n+m}$
    \item Compute the intersection of $F^{-1}(X_i)\cap \mathcal{D}^n\times \mathcal{U}$
    \item Compute the projection of each $F^{-1}(X_i)\cap \mathcal{D}^n\times \mathcal{U}$ on $\mathcal{D}^n$, denoted as $Y_i$
    \item Compute the union $\bigcup_iY_i$
\end{enumerate}
However, the time complexity and space complexity of this method is extremely large  due to the exponential combinations. 
Therefore, in this work, we provide a new method with lower complexity but with some approximation to solve the problem.

\section{Basic ideas}\label{sec: Basic ideas}
In this section, we present the basic idea for approximating the backward reach set described in Section~\ref{sec: problem formulation}. As a representative example, consider the computation of the set $\Upsilon_\mathcal{U}^{A,B}(E)$. The key steps of our method are as follows:
(i) We first transform the system $x_k = F \otimes y_{k-1}$ into an autonomous form, following the approach used in the DBM method.
(ii) We compute the pre-image of the target set $E$ under $F$, denoted as $F^{-1}(E)$. We then intersect this set with $\mathcal{D}^n \times \mathcal{U}$. The resulting set contains all pairs $(x, u)$ such that, starting from state $x$ and applying control $u \in \mathcal{U}$, the system transitions into the target region $E$ at the next time step.
(iii) We then compute an approximation of the projection $\mathbb{P}_n^{n+m}(F^{-1}(E))$. This approach allows us to avoid computing the projection $\mathbb{P}_n^{n+m}(F^{-1}(E))$ directly, which, to the best of our knowledge, can only be done via piecewise-affine (PWA) system techniques and incurs significant computational complexity. To approximate the resulting reachable set, we choose tropical polyhedra due to their compact computer representation and favorable mathematical properties. To approximate $\mathbb{P}_n^{n+m}(F^{-1}(E))$, we introduce the notion of closure approximation: we show later in the paper that the closure of this set, $\overline{\mathbb{P}_n^{n+m}(F^{-1}(E))}$, can be expressed as a finite union of tropical polyhedra. A key technical observation is that the $\mathcal{V}$-form or $\mathcal{M}$-form of $\overline{\mathbb{P}_n^{n+m}(F^{-1}(E))}$ cannot be computed directly. However, we resolve this issue by proving that
\(
\overline{\mathbb{P}_n^{n+m}(F^{-1}(E))} = \mathbb{P}_n^{n+m}(\overline{F^{-1}(E)}),
\)
where the right-hand side is much easier to compute.

\section{Approximation of Sets by Tropical Polyhedras}\label{sec: approximation method}
A principal challenge of our idea is how to give a proper approximation for a set in $\mathscr{Z}_n$ by polyhedras. In this section, we will describe the technical tool needed for such an approximation, justify its reasonability and provides an algorithm in the end for calculating such an approximation. 

We first specialize a kind of subset of $\mathcal{D}^n$: $\mathscr{X}_n \subset 2^{\mathcal{D}^n}$ formed by subset of $\mathcal{D}^n$ that is a union of finitely many tropical polyhedras. Since all sets in $\mathscr{X}_n$ is a closed set, we have $\mathscr{X}_n\subsetneq \mathscr{Z}_n$. Now given a set $S \in \mathscr{Z}_n$, our goal is to design a relatively faster algorithm that approximate $S$ with sets of $\mathscr{X}_n$ but with some errors that are negligible.

We define an auxiliary set $\mathscr{Z}_n' \subseteq 2^{\mathcal{D}^n}$ whose element are subset of $\mathcal{D}^n$ defined in the following way:
\[S = \emptyset \mid \mathscr{H} \mid S^c \mid S_1\cup S_2 \mid S_1\cap S_2\]
where $\mathscr{H}$ denote a half-space of $\mathcal{D}^n$. Similarly, define $\mathscr{X}_n'\subseteq 2^{\mathcal{D}^n}$ as the set of all subset of $\mathcal{D}^n$ that are union of finitely many tropical cones in $\mathcal{D}^n$.

By lemma \ref{instection with Dn} , we can prove that every set $X \in \mathscr{Z}_n$ (resp. $X \in \mathscr{X}_n$) can be expressed as $X = Y \cap \mathcal{D}^n$ for some $Y \in \mathscr{Z}_n'$ (resp. $Y \in \mathscr{X}_n'$). Thus intuitively, if we can find an approximation of $X \in \mathscr{Z}_n'$ by a set $X'\in\mathscr{X}_n'$, then the set $X\cap \mathcal{D}^n \in \mathscr{Z}_n$ can be approximated by $X' \cap \mathcal{D}^n \in \mathscr{X}_n$. The problem is thus transformed to approximating sets in $\mathscr{Z}_n'$ by sets in $\mathscr{X}_n'$. Notice that every set in $\mathscr{Z}_n'$ is a finite union of DBMs, while every set in $\mathscr{X}_n'$ is a finite union of closed DBMs. As each DBM can be written as intersection of set of the form $\{x \in \mathcal{D}^n\mid x_i \sim_{i,j} x_j + \alpha_{i,j}\}$, where $\sim_{i,j}$ is $<$ or $\le$. A traditional way of approximating a DBM by a closed DBM is to change all open sets of the form $\{x \in \mathcal{D}^n\mid x_i < x_j + \alpha_{i,j}\}$ into set of the form $\{x \in \mathcal{D}^n\mid x_i \le x_j + \alpha_{i,j}\}$, and the error of such an approximation is just the set $\{x\in \mathcal{D}^n \mid x_i = x_j + \alpha_{i,j}\}$, which is empty if $\alpha_{i,j} = +\infty$ and is a negligible hyperplan otherwise.

The previous discussion motivate us to defined the following:
\begin{mydef}
    Let $M \in \mathcal{R}^{n\times n}$ be a DBM, we define the closed form of $M$, denoted as $\overline{M}$, by replacing all the $<$ appearing in the expression of $M$ by $\le$.
\end{mydef}
For a simple DBM $\mathscr{R}(M)$, it can be approximated by $\mathscr{R}(\overline{M})$. Further more, the following proposition tell us that our approximation method is also ``topologically good":
\begin{mypro}\label{closure of DBM}
    If $\mathscr{R}(M)$ is non empty, then we have $\mathscr{R}(\overline{M}) = \overline{\mathscr{R}(M)}$.
\end{mypro}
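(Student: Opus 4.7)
The plan is to prove the two inclusions separately, exploiting the fact that the topology on $\mathcal{D}^n$ is, via coordinatewise $\exp$, just the usual topology on $[0,\infty)^n$ (with $\exp(\varepsilon)=0$). In these \textit{exponential coordinates} $X_i=\exp(x_i)$, an entry $(m_{i,j},\sim_{i,j})$ of $M$ translates into the (strict or weak) inequality $X_i\sim_{i,j} c_{i,j} X_j$ with $c_{i,j}=\exp(m_{i,j})\in[0,+\infty]$, which is an ordinary linear constraint on $[0,\infty)^n$. This reformulation is what makes everything tractable.

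The inclusion $\overline{\mathscr{R}(M)}\subseteq\mathscr{R}(\overline{M})$ is the easy one: each defining constraint of $\mathscr{R}(\overline{M})$ is a weak inequality $X_i\le c_{i,j}X_j$, hence a closed half-space, so $\mathscr{R}(\overline{M})$ is closed; since it trivially contains $\mathscr{R}(M)$, it contains its closure.

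For the converse $\mathscr{R}(\overline{M})\subseteq\overline{\mathscr{R}(M)}$, the idea is to use a witness $y\in\mathscr{R}(M)$ (which exists by the non-emptiness hypothesis) to perturb any $x\in\mathscr{R}(\overline{M})$ back into $\mathscr{R}(M)$. Concretely, for $t\in[0,1]$ I would define $z_t\in\mathcal{D}^n$ by
\[
\exp(z_{t,i}) \;=\; (1-t)\exp(x_i) + t\exp(y_i),\qquad i=1,\dots,n,
\]
so that $z_0=x$ and $z_t\to x$ in the metric on $\mathcal{D}^n$ as $t\to 0^+$. The claim is that $z_t\in\mathscr{R}(M)$ for every $t\in(0,1]$, which would immediately give $x\in\overline{\mathscr{R}(M)}$. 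To verify this, I would fix an entry $(i,j)$ and argue on cases: if $c_{i,j}=+\infty$ the constraint is vacuous; if $c_{i,j}=0$ (only possible when $\sim_{i,j}=\le$, since the DBM convention forbids $(-\infty,<)$) then both $X_i$ and $Y_i$ are forced to be $0$, hence so is $\exp(z_{t,i})$; and in the generic case $0<c_{i,j}<\infty$ a direct convex-combination estimate using $X_i\le c_{i,j}X_j$ and $Y_i\sim_{i,j} c_{i,j}Y_j$ gives
\[
\exp(z_{t,i}) \;=\; (1-t)X_i + tY_i \;\sim_{i,j}\; c_{i,j}\bigl((1-t)X_j+tY_j\bigr) \;=\; c_{i,j}\exp(z_{t,j}),
\]
where the resulting relation is strict whenever $\sim_{i,j}$ is strict and $t>0$ (the strict piece $tY_i<tc_{i,j}Y_j$ survives the sum).

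The main obstacle is bookkeeping around the $\pm\infty$ coefficients: one has to check that the convention $+\infty+(-\infty)=+\infty$ plays well with the interpolation, and that the exclusion of $(-\infty,<)$ from DBM coefficients is exactly what prevents the strict case from degenerating when $c_{i,j}=0$. Once these corner cases are dispatched, the interpolation argument is straightforward and yields the desired topological statement.
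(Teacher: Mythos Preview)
Your argument is correct and, in fact, somewhat cleaner than the paper's. Both proofs share the same core idea---pick a witness $y\in\mathscr{R}(M)$ and slide any $x\in\mathscr{R}(\overline{M})$ toward $y$ along a convex path---but they implement it differently. The paper first produces a witness of \emph{maximal support} (using closure of $\mathscr{R}(M)$ under $\oplus$), reduces WLOG to the case where this support is all of $[n]$ so that no $(-\infty,\le)$ coefficient survives, proves the equality on $\mathbb{R}^n$ via ordinary convex combinations $\frac{1}{k}x+(1-\frac{1}{k})y$, and only then extends to $\mathcal{D}^n$ by a second limiting argument with the max-plus sequence $x\oplus(-k)\cdot y$. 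Your passage to exponential coordinates collapses these two stages into one: the interpolation $(1-t)\exp(x)+t\exp(y)$ already lives in $[0,\infty)^n$, so $\varepsilon$-coordinates are handled transparently as zeros, and no separate support reduction or max-plus limit is needed. The trade-off is that your bookkeeping for the corner cases $c_{i,j}\in\{0,+\infty\}$ has to be done explicitly (you sketch it correctly; in particular the exclusion of $(-\infty,<)$ is exactly what makes the $c_{i,j}=0$ case weak, and for a strict finite constraint one always has $Y_j>0$, so $tY_i<tc_{i,j}Y_j$ is genuinely strict), whereas the paper's full-support reduction makes most of those cases disappear up front at the cost of the extra extension step.
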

\begin{proof}
    First note that for all $x_1,x_2 \in \mathscr{R}(M)$, we have $x_1\oplus x_2 \in \mathscr{R}(M)$ and $\mathrm{supp}(x_1 \oplus x_2) = \mathrm{supp}(x_1)\cup \mathrm{supp}(x_2)$. We can thus find an $x \in \mathscr{R}(M)$ that is of full support: for all index $i \le n$, $x_i = \varepsilon \Rightarrow \forall y \in \mathscr{R}(M), y_i = \varepsilon$. To simplify without loss of generation, we suppose that $\mathrm{supp}(x) = [n]$, in this case the term $(-\infty,\le)$ will not appear in the coefficient of $M$. We first prove that $\overline{\mathscr{R}(M)}\cap\mathbb{R}^n = \mathscr{R}(\overline{M})\cap \mathbb{R}^n$. The $\subseteq$ is immediate. Let $y \in \mathscr{R}(\overline{M})\cap \mathbb{R}^n$, and conxsider the sequence $\left\{z_k = \frac{1}{k}.x + \left(1-\frac{1}{k}\right).y\right\}_{k\in \mathbb{Z}_{>0}} $ where ``$.$" denote the usual multiplication by positive scalar. The sequence is contained in $\mathscr{R}(M)\cap \mathbb{R}^n$, this is because if $x_i < x_j + \alpha_{i,j}$ and $y_i \le y_j + \alpha_{i,j}$ for some $\alpha_{i,j} \in \mathbb{R}\cup\{+\infty\}$, then  $(\lambda.x_i + (1-\lambda).y_i) < (\lambda.x_j + (1-\lambda).y_j)+\alpha_{i,j}$. As $z_k$ converges to $y$, we conclude that $\overline{\mathscr{R}(M)}\cap\mathbb{R}^n \supseteq \mathscr{R}(\overline{M})\cap \mathbb{R}^n$. Now we prove that $\overline{\mathscr{R}(M)} = \mathscr{R}(\overline{M})$. It suffice to prove $\overline{\mathscr{R}(M)} \supseteq \mathscr{R}(\overline{M})$. Let $x$ be in $\mathscr{R}(\overline{M})$ and we wish to show that $x \in \overline{\mathscr{R}(M)}$. Take $y \in \mathscr{R}(M)\cap \mathbb{R}^n$, then the sequence $\{z_k = x \oplus (-k)\cdot y\}_{k \in \mathbb{Z}_{>0}}$ converge to $x$ (the ``$\cdot$" denote multiplication in max-plus sense), and as  $(-k)\cdot y \in \mathbb{R}^n\cap \mathscr{R}(M) \subseteq\mathbb{R}^n\cap \mathscr{R}(\overline{M})$, we have $z_k \in \mathscr{R}(\overline{M})\cap \mathbb{R}^n = \overline{\mathscr{R}(M)}\cap\mathbb{R}^n$. $x$ is the limit of $z_k$, thus $x \in \overline{\mathscr{R}(M)}$.
 \end{proof}
 We can now derive our first method of approximating sets in $\mathscr{Z}_n'$ by set in $\mathscr{X}_n'$: take a set $X \in \mathscr{Z}_n'$, denote $X$ as $X = \bigcap_i \mathscr{R}(M_i)$ with all $\mathscr{R}(M_i)$ non empty, then an approximation for $X$ is given by $X' = \bigcup_i \mathscr{R}(\overline{M_i})$, and what's more, $X'$ is exactly the topological closure of $X$. 

Although we succeed in approximating $X$ by a union of tropical cones (since every DBM is a tropical cone), this method can be very complicated since the number of DBMs needed to represent a tropical cone can be exponentially large. In the following we will represent a less complicated approximation method, which is in fact equivalent to this first method described here. We first present the following useful lemma which is essential of our approximation theory.
 \begin{mylem}\label{closure of DBM2}
     Let $X$ be a union of finitely many DBMs, and $Y$ be a union of finitely many 
     open DBMs (all DBM are of region non empty). Then we have 
     \[\overline{X\cap Y} = \overline{\overline{X}\cap Y}\]
 \end{mylem}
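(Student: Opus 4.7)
My plan is to prove the two inclusions separately, with the easy direction first. The inclusion $\overline{X\cap Y} \subseteq \overline{\overline{X}\cap Y}$ follows immediately from the monotonicity of closure: since $X \subseteq \overline{X}$, intersecting with $Y$ gives $X\cap Y \subseteq \overline{X}\cap Y$, and then closing both sides yields the inclusion. The work is all in the reverse direction $\overline{\overline{X}\cap Y} \subseteq \overline{X\cap Y}$. Using idempotence and monotonicity of closure once more, it is enough to prove the pointwise inclusion $\overline{X}\cap Y \subseteq \overline{X\cap Y}$.

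The crucial structural observation is that $Y$ is topologically open in $\mathcal{D}^n$. Given this, the argument is a one-liner: let $z \in \overline{X}\cap Y$. By the sequential characterization of closure (the topology on $\mathcal{D}^n$ is metrizable via the product metric built from $d(x,y) = |\exp(x)-\exp(y)|$), pick $z_k \in X$ with $z_k \to z$. Since $Y$ is an open neighborhood of $z$, there is an index $N$ beyond which every $z_k$ belongs to $Y$, so $z_k \in X\cap Y$ for all $k \ge N$. Hence $z$ is a limit of points of $X\cap Y$, i.e., $z \in \overline{X\cap Y}$.

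The remaining burden is therefore to verify that a finite union of open DBMs is an open set. By de Morgan and distributivity a finite union of open sets is open, so I would reduce to a single open DBM and then further reduce to a single strict constraint of the form $\{x \in \mathcal{D}^n \mid x_i < m_{i,j}+x_j\}$. A short case analysis on whether $m_{i,j}$ is finite or $+\infty$ and on whether $x_j$ is finite or $-\infty$ confirms openness: when $m_{i,j} = \infty$ the set is all of $\mathcal{D}^n$; when $m_{i,j}$ is finite, any point in the set must have $m_{i,j}+x_j > x_i \ge -\infty$, which forces $x_j$ to be finite, and then the $\exp$-based metric makes both sides of the strict inequality continuous functions of $x$ on a neighborhood, so the inequality is preserved under small perturbations (perturbations that could send $x_i$ from $-\infty$ to a finite but very negative value are still admissible since the inequality is strict and the right-hand side is bounded away from $-\infty$).

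The main obstacle is that last verification at boundary points of $\mathcal{D}^n$ where some coordinates are $-\infty$: the topology is non-standard and one has to be slightly careful that the strict $<$ really does survive when a coordinate is allowed to ``move off $-\infty$.'' Once this openness is in hand, the non-emptiness hypothesis on the individual DBMs is not actually needed for the lemma itself (one simply discards empty pieces); it appears in the statement only because the approximation framework surrounding the lemma assumes non-degeneracy. I would note this in passing but not dwell on it.
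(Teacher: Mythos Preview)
Your proof is correct, and it takes a genuinely different route from the paper's.

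The paper decomposes $X = \bigcup_i \mathscr{R}(M_i)$ and $Y = \bigcup_j \mathscr{R}(N_j)$, reduces to showing $\overline{\mathscr{R}(M_i)\cap\mathscr{R}(N_j)} = \overline{\overline{\mathscr{R}(M_i)}\cap\mathscr{R}(N_j)}$ for each pair, and then invokes Proposition~\ref{closure of DBM} to compute both closures explicitly as $\mathscr{R}(\overline{M_i}\oplus\overline{N_j})$ via the matrix closure operator. Your argument, by contrast, is purely topological: once you know $Y$ is open, the inclusion $\overline{X}\cap Y \subseteq \overline{X\cap Y}$ is the standard fact that an open set meeting the closure of $X$ must already meet $X$, and no DBM structure on $X$ is used at all. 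Your proof is therefore more general (it works for arbitrary $X$ and any open $Y$) and bypasses Proposition~\ref{closure of DBM} entirely; you are also right that the non-emptiness hypothesis becomes superfluous. What the paper's approach buys is a concrete matrix-level description of the closure, which is consonant with the algorithmic spirit of the surrounding section, though for the bare statement of the lemma your argument is cleaner. Both proofs share the one nontrivial ingredient, namely that an open DBM defines an open region in $\mathcal{D}^n$, which the paper asserts without proof and which you sketch; your case analysis at points with $x_i=\varepsilon$ is the right way to handle it.
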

 \begin{proof}
     Denote $X = \bigcup_{i=1}^r\mathscr{R}(M_i)$ and $Y = \bigcup_{j=1}^s\mathscr{R}(N_j)$, with $N_i$ open DBMs. Then we have $X \cap Y = \bigcup_{i,j}$, $\overline{X} = \bigcup_i\overline{\mathscr{R}(M_i)}$ and $\overline{X}\cap Y = \bigcup_{i,j}\overline{\mathscr{R}(M_i)}\cap \mathscr{R}(N_j)$. It suffice to prove that for all $i,j$, we have $\overline{\mathscr{R}(M_i) \cap \mathscr{R}(N_j)} = \overline{\overline{\mathscr{R}(M_i)}\cap \mathscr{R}(N_j)}$. As $N_j$ is a open DBM, the region associated $\mathscr{R}(N_j)$ is also open. We have thus $\mathscr{R}(M_i) \cap \mathscr{R}(N_j)$ is non empty iff $\overline{\mathscr{R}(M_i)}\cap \mathscr{R}(N_j)$ is non empty. The case both sets are empty is trivial. Suppose that they are both non empty, then we have  $\overline{\mathscr{R}(M_i) \cap \mathscr{R}(N_j)} = \mathscr{R}(\overline{M_i\oplus N_j}) = \mathscr{R}(\overline{M_i}\oplus \overline{N_j})$, and $\overline{\overline{\mathscr{R}(M_i)}\cap \mathscr{R}(N_j)} = \overline{\mathscr{R}(\overline{M_i})\cap \mathscr{R}(N_j)} = \mathscr{R}(\overline{\overline{M_i}\oplus N_j}) = \mathscr{R}(\overline{M_i}\oplus \overline{N_j})$.
 \end{proof}

Note that all set in $\mathscr{Z}_n'$ can be rewrite as a finite union of sets of the form $E = \mathscr{H}_1 \cap \dots \cap \mathscr{H}_{k_1} \cap \mathscr{H}_{k_1+1}^c\dots \mathscr{H}_{k_1+k_2}^c$, where $\mathscr{H}_i$ are tropical half-spaces. It suffice for us to give an approximation for each one of these sets. Our principle idea is to do the calculation step by step: first we calculate a $\mathcal{V}$-form of $\overline{\mathscr{H}_1 \cap \dots \cap \mathscr{H}_{k_1} \cap \mathscr{H}_{k_1+1}^c}$, then we calculate that of $\overline{\overline{\mathscr{H}_1 \cap \dots \cap \mathscr{H}_{k_1} \cap \mathscr{H}_{k_1+1}^c}\cap \mathscr{H}_{k_1+1}^c}$, we iterate until we meet $\mathscr{H}_{k_1+k_2}^c$. As $\mathscr{H}_{k_1+i}^c$ is a finite union of  open DBMs, we can simplify the set thanks to lemma \ref{closure of DBM2}, and finally we will get the closure of $E$ as the approximation of $E$ with this method, thus the novel method is equivalent to our first method. The following theorem tells us how calculate the $\mathcal{V}$-form of intermediate sets such as $\overline{\overline{\mathscr{H}_1 \cap \dots \cap \mathscr{H}_{k_1} \cap \mathscr{H}_{k_1+1}^c}\cap \mathscr{H}_{k_1+1}^c}$.

\begin{mythm}\label{H^q finitely generated}
Let $\mathscr{C} \subseteq \mathcal{D}^n$ be a tropical cone with $V_0$ its set of generating vectors. Let $\mathscr{H} = \{x \in \mathcal{D}^n\mid (a|x) \le (b|x)\}$ be the half-space in $\mathcal{D}^n$ where $a \in \mathcal{D}^n\setminus \{\varepsilon_n\}$. We suppose in addition that $\mathscr{C} \cap \mathscr{H}^c \cap \mathbb{R}^n \neq \emptyset$. Then the topological closure of the tropical cone $\mathscr{C}\cap \mathscr{H}^c$ is generated by the following set
\begin{align*} 
 V_\infty = & \{v \in V_0 \mid (b|v) < (a|v)\}\cup \bigg\{ (b|w)\cdot v \oplus (a|v)\cdot w \ \bigg| \\ 
& \quad\quad v, w \in V_0, \ (b|v) < (a|v),\ (a|w) \leq (b|w) \bigg\}
\end{align*}
i.e. $\mathrm{Span}(V_\infty) = \overline{\mathscr{C}\cap \mathscr{H}^c}$.

\end{mythm}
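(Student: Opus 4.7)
The plan is to establish the equality $\mathrm{Span}(V_\infty) = \overline{\mathscr{C}\cap \mathscr{H}^c}$ by proving each inclusion separately.

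For the inclusion $\mathrm{Span}(V_\infty) \subseteq \overline{\mathscr{C}\cap \mathscr{H}^c}$, I first observe that $\mathscr{C}\cap \mathscr{H}^c$ is stable under $\oplus$ and scalar multiplication: stability under the scalar action is immediate, and for $\oplus$ one checks that if $(b|x)<(a|x)$ and $(b|y)<(a|y)$, say with $(a|x)\ge (a|y)$, then $(a|x\oplus y)=(a|x)$ while $(b|x\oplus y)\le\max((b|x),(b|y))<(a|x)$. By continuity of the tropical operations the closure inherits these stability properties, so it suffices to place each element of $V_\infty$ in $\overline{\mathscr{C}\cap \mathscr{H}^c}$. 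Generators $v\in V_0$ with $(b|v)<(a|v)$ already lie in $\mathscr{C}\cap \mathscr{H}^c$. For a mixed generator $g_{v,w}=(b|w)\cdot v\oplus (a|v)\cdot w$, whenever $(b|w)\in \mathbb{R}$ the perturbed points $g_\epsilon = (b|w)\cdot v \oplus ((a|v)-\epsilon)\cdot w$ lie in $\mathscr{C}\cap \mathscr{H}^c$ for small $\epsilon>0$ and converge to $g_{v,w}$; in the degenerate case $(b|w)=-\infty$, one has $(a|w)=-\infty$ and $g_{v,w}=(a|v)\cdot w$, and the sequence $g_{v,w}\oplus \mu\cdot v$ with $\mu\to -\infty$ stays in $\mathscr{C}\cap\mathscr{H}^c$ (using that $v\in V_0^+$) and converges to $g_{v,w}$.

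For the reverse inclusion I use that the finitely generated cone $\mathrm{Span}(V_\infty)$ is closed, by the $\mathcal{V}$-to-$\mathcal{M}$-form duality recalled in Section~\ref{sec: preliminary}, so it suffices to show $\mathscr{C}\cap \mathscr{H}^c\subseteq \mathrm{Span}(V_\infty)$. Given $x=\bigoplus_i \lambda_i \cdot v_i\in\mathscr{C}\cap \mathscr{H}^c$, partition $V_0$ into $V_0^+=V_0\cap \mathscr{H}^c$ and $V_0^-=V_0\cap \mathscr{H}$. The nonemptiness hypothesis forces $V_0^+\neq \emptyset$, since otherwise all combinations would lie in $\mathscr{H}$. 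Any $w\in V_0^-$ contributes $\lambda_w+(a|w)\le \lambda_w+(b|w)\le (b|x)<(a|x)$, so the maximum in $(a|x)=\bigoplus_i (\lambda_i+(a|v_i))$ must be attained by some $v^*\in V_0^+$ with $\lambda_{v^*}+(a|v^*)=(a|x)$. I then fix this pivot and rewrite each contribution $\lambda_w\cdot w$ as $\mu_w\cdot g_{v^*,w}$ with $\mu_w=\lambda_w-(a|v^*)$. Unpacking, the $w$-component of $\mu_w\cdot g_{v^*,w}$ is exactly $\lambda_w\cdot w$, while the extra $v^*$-component produced has coefficient $\lambda_w+(b|w)-(a|v^*)\le (b|x)-(a|v^*)<(a|x)-(a|v^*)=\lambda_{v^*}$, hence is dominated by the original $\lambda_{v^*}\cdot v^*$. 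Summing over $w\in V_0^-$ leaves the overall expression equal to $x$, and exhibits it as an element of $\mathrm{Span}(V_\infty)$.

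The main obstacle is the reverse-inclusion rewriting: one must select a single pivot $v^*\in V_0^+$ that serves uniformly for every $w\in V_0^-$ and verify that the substitutions neither disturb the $w$-components nor inflate the coefficient of $v^*$. The decisive estimate $\lambda_w+(b|w)<\lambda_{v^*}+(a|v^*)$ rests on the \emph{strict} inequality $(b|x)<(a|x)$ coming from $x\in \mathscr{H}^c$ together with the optimality of $v^*$ in the definition of $(a|x)$; losing either ingredient would break the bound. A secondary difficulty is bookkeeping for $-\infty$ entries (degenerate generators in the forward direction and vanishing coefficients in the reverse direction), which is exactly where the hypothesis $\mathscr{C}\cap \mathscr{H}^c\cap \mathbb{R}^n\neq \emptyset$ is needed to supply a full-support witness for the approximation arguments.
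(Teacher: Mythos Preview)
Your argument is correct and takes a genuinely different route from the paper's proof.

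The paper approximates the open set $\mathscr{H}^c$ from inside by the nested closed half-spaces $\mathscr{F}_k=\{x:(a|x)\ge (b|x)+1/k\}$, so that $\mathscr{H}^c\cap\mathbb{R}^n=\bigcup_k\mathscr{F}_k\cap\mathbb{R}^n$. For each $k$ the set $\mathscr{C}\cap\mathscr{F}_k$ is a bona fide tropical cone, and the paper applies the generating-set formula of Allamigeon et al.\ (Theorem~14 in the cited reference) to obtain an explicit generator set $V_k$; letting $k\to\infty$ produces $V_\infty$. The identification $\overline{\bigcup_k(\mathscr{C}\cap\mathscr{F}_k)}=\overline{\mathscr{C}\cap\mathscr{H}^c}$ is then deduced from the observation that two closed subsemimodules coinciding on a nonempty slice of $\mathbb{R}^n$ must coincide globally, and it is precisely here that the hypothesis $\mathscr{C}\cap\mathscr{H}^c\cap\mathbb{R}^n\neq\emptyset$ enters.

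Your proof bypasses the limiting procedure and argues both inclusions directly: the forward one by explicitly perturbing each mixed generator into $\mathscr{C}\cap\mathscr{H}^c$, and the reverse one by your pivot rewriting, which in effect re-derives the combinatorial content of Allamigeon's construction in this single-halfspace situation rather than quoting it. This makes your argument self-contained and more elementary. It also reveals that the full hypothesis is stronger than what your proof actually consumes: you only use $\mathscr{C}\cap\mathscr{H}^c\neq\emptyset$ (equivalently $V_0^+\neq\emptyset$), which ensures the closure is a nonempty subsemimodule so that containing the generators forces containing their span. Your closing remark attributing the hypothesis to ``supplying a full-support witness'' really describes the paper's mechanism rather than your own; in your argument the role of the hypothesis is the simpler one just stated.
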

\begin{proof}
Before proving this theorem, we first provided a fact that is useful in the following:
Let $\mathscr{F} = \{x \in \mathcal{D}^n\mid (a|x) \le (b|x)\}$ be a half-space, a generating set for the tropical cone $\mathscr{C}\cap \mathscr{F}$ is given by 
    \begin{align*}
        V& = \{v \in V_0\mid v \in \mathscr{F}\}\cup \\& \{(a|w)v\oplus (b|v)w\mid v,w \in V_0,\ v \in \mathscr{F},\ w \notin \mathscr{F}\}
    \end{align*}
This is a direct application of \cite[theorem 14]{allamigeon2013computing}.

Now we prove the theorem. Firstly, one verifies quickly that 
    \[\mathscr{H}^c\cap \mathbb{R}^n = \{x \in \mathcal{D}^n\mid (a|x) > (b|x)\}\cap \mathbb{R}^n = \bigcup_{k\in \mathbb{Z}_{>0}}\mathscr{F}_k\cap \mathbb{R}^n\]
     where $\mathscr{F}_k$ are half-spaces defined by 
    \begin{align*}
        \mathscr{F}_k = \{x \in \mathcal{D}^n\mid(a|x)\ge (b|x)+1/k\} \\= \{x\in \mathcal{D}^n\mid (a|x)\ge (b^{(k)}|x)\}
    \end{align*}
    and $b^{(k)}$ is defined as $b^{(k)} = (b_1+1/k,\dots,b_n+1/k)$. This is because for $x \in \mathbb{R}^n$, we have $(a|x) \in \mathbb{R}$, thus $(a|x) > (b|x)$ is equivalent to $\exists k \in \mathbb{Z}_{>0},\ (a|x) \ge (b|x)+1/k$. Denote $F=\left(\bigcup_{k\in \mathbb{Z}_{>0}}\mathscr{F}_k\right)\cap \mathscr{C}$, then we have $F \cap\mathbb{R}^n = \mathscr{C}\cap \mathscr{H}^c\cap \mathbb{R}^n$. By taking the closure in $\mathbb{R}^n$, we have $\overline{F \cap \mathbb{R}^n} = \overline{\mathscr{C}\cap \mathscr{H}^c \cap \mathbb{R}^n}$. Remark that $\mathbb{R}^n$ is an open set in the metrizable space $\mathcal{D}^n$, let $\overline{F}$ denote the closure of $F$ in $\mathcal{D}^n$, we then have $\overline{F}\cap \mathbb{R}^n = \overline{F\cap \mathscr{C}}$ and similarly, $\overline{\mathscr{C}\cap \mathscr{H}^c\cap \mathbb{R}^n} = \overline{\mathscr{C}\cap \mathscr{H}^c}\cap \mathbb{R}^n$.
    
    We have 
    \[\overline{F} \cap \mathbb{R}^n= \overline{\mathscr{C}\cap \mathscr{H}^c}\cap \mathbb{R}^n \neq \emptyset\]
    
    We can also derive that $\overline{F} = \overline{\mathscr{C}\cap \mathscr{H}^c}$. This is due to the fact that if $P$, $Q$ are two closed semimodules in $\mathcal{D}^n$ such that $P\cap \mathbb{R}^n = Q \cap \mathbb{R}^n \neq \emptyset$, then $P = Q$. To prove this, let $x \in P\cap \mathbb{R}^n = Q \cap \mathbb{R}^n$ and $y \in P$, then the subsequence $\{(-k).x\oplus y \}_{k\in \mathbb{Z}_{>0}} \subseteq \mathbb{R}^n\cap P = \mathbb{R}^n\cap Q$ converge to $y$, which means that $y \in Q$. It suffice for us to find the generating set for $F=\overline{\left(\bigcup_{k\in \mathbb{Z}_{>0}}\mathscr{F}_k\right)\cap \mathscr{C}}$.
    
    We remark that for all $k\in \mathbb{Z}_{>0}$ we have $\mathscr{F}_k\subset \mathscr{F}_{k+1}$. Define $V_k$ as
\begin{align*} 
 V_k = & \{v \in V_0 \mid (b|v) < (a|v)\} \\ 
& \quad \cup \bigg\{ \left((b|w) + \frac{1}{k}\right)\cdot v \oplus (a|v)\cdot w \ \bigg| \\ 
& \quad\quad v, w \in V_0, \ (b|v) < (a|v),\ (a|w) \leq (b|w) \bigg\} \end{align*}
For $v\in V_0$ such that $(b|v) < (a|v)$, there exists $N_v > 0$ such that for all $k \ge N_v$, we have $v \in \mathscr{F}_k$, and for all $k \in \mathbb{Z}_{>0}$, we have $(a|w) \le (b|w) \Rightarrow w \notin \mathscr{F}_k$. Now let $N = \max_{v\in V_0,\ (b|v)<(a|v)}(N_v)$, then $N< +\infty $ since set $V_0$ is finite. From the fact presented in the beginning of the proof, we know that for $k>N$, the finite set $V_k$ is a generator of $\mathscr{C}\cap\mathscr{F}_k$. Now consider the tropical cone $\mathrm{Span}(V_\infty)$. First we prove that $\mathrm{Span}(V_\infty) \supseteq F$. Let $x \in \left(\bigcup_{k \in \mathbb{Z}_{>0}}\mathscr{F}_k\right)\cap \mathscr{C} $, then there exist $k\in \mathbb{Z}_{>0}$ such that $x \in \mathscr{F}_k\cap \mathscr{C}$. We can suppose that $k$ is large enough that $\mathscr{C}\cap\mathscr{F}_k$ is generated by $V_k$. $x$ can thus be written in the following form
\[x = \bigoplus_{v}\lambda_v.v \oplus \bigoplus_{v,w}\lambda_{v,w}.\left[\left((b|w)+\frac{1}{k}\right).v \oplus (a|v).w\right]\]
where $\lambda_v$ and $\lambda_{v,w}$ are elements in $\mathcal{D}$ indexed by the sets $\{v \in V_0\mid (b|v)<(a|v)\}$ and $\{w \in V_0\mid (b|v)\ge(a|v)\}$. We then proceed to rewrite $x$ as a combination of elements in $V_\infty$:
\begin{equation}\label{x in spanV}
    x = \bigoplus_{v}\lambda_v'.v \oplus\bigoplus_{v,w}\lambda_{v,w}.\left[(b|w).v \oplus (a|v).w\right]
\end{equation}
where $\lambda_v'$ is defined as \(\lambda_v' =\lambda_v \oplus \bigoplus_w(\lambda_{v,w}+(b|w)+1/k)\), which proves that $x \in \mathrm{Span}(V_\infty)$. It remains to prove that $\mathrm{Span}(V_\infty) \subset \overline{\mathscr{C}\cap\mathscr{H}^c}$. Take an element $x\in \mathrm{Span}(V_\infty)$ described as in formula \ref{x in spanV}, and define $x_k$ as
\[x_k = \bigoplus_{v}\lambda_v'.v \oplus \bigoplus_{v,w}\lambda_{v,w}.\left[\left((b|w)+\frac{1}{k}\right).v \oplus (a|v).w\right]\]
then for $k > N$, $x_k \in \mathscr{F}_k \subseteq \mathscr{H}^c$. $\{x_k\}_{k\ge N}$ is thus a sequence in $\mathscr{C}\cap \mathscr{H}^c$ By continuity of the max-plus operations $\max$ and $+$, we have $x_k \xrightarrow{k \to \infty} x$, which proves that $x \in \overline{\mathscr{C}\cap\mathscr{H}^c}$.
    \end{proof}
\begin{remark}
We remark that a critical condition of theorem \ref{H^q finitely generated} is that the set $\mathscr{C}\cap \mathscr{H}^c \cap \mathbb{R}^n$ is supposed to be non empty. We shall be able to determine this kind of situation in practical use. Remark that the fact $\mathrm{Span}(V_\infty)\cap \mathbb{R}^n = \overline{\mathscr{C}\cap \mathscr{H}^c}\cap \mathbb{R}^n$ still holds even if we remove this non-empty constraint. Thus we can check the emptiness by regarding the set $V_\infty$, if $\bigoplus_{v\in V_\infty}v$ is not of full support, meaning that $\mathrm{Span}(V_\infty)\cap \mathbb{R}^n = \overline{\mathscr{C}\cap \mathscr{H}^c} \cap \mathbb{R}^n = \emptyset$, the later is equivalent to $\mathscr{C}\cap \mathscr{H}^c \cap \mathbb{R}^n = \emptyset$. If unfortunately, we encounter cases where $\mathscr{C}\cap \mathscr{H}^c \cap \mathbb{R}^n = \emptyset$, we can restrict $\mathscr{C}$ and $\mathscr{H}^c$ to smaller support such that the hypotheses of theorem \ref{H^q finitely generated} becomes valid. We describe how this works later in an algorithm.        
\end{remark}
\begin{remark}
For $\mathscr{H} = \langle a,b \rangle$ such that $a = \varepsilon_n$, then $\mathscr{H} = \mathcal{D}^n$ and $\mathscr{H}^c = \emptyset$, it is thus a trivial case.
\end{remark}
We now provided an example showing the necessity of considering the set $\overline{\mathscr{C}\cap\mathscr{H}^c}$:
\begin{myexm}
    Consider the tropical half-space $\mathscr{H}$ given by $\mathscr{H} = \{(x_1,x_2) \in \mathbb{R}_{\max}^2\mid \max(x_1+1,x_2)\le \max(x_1,x_2)\}$. It is immediate to verify that $\mathscr{H}$ is in fact the set $\{(x_1,x_2)\in \mathbb{R}_{\max}\mid x_1+1\le x_2\}$. Note that the half-space $\mathscr{H}' =\{(x_1,x_2) \in \mathbb{R}_{\max}^2\mid \max(x_1+1,x_2)\ge \max(x_1,x_2)\}$ which keeps the same definition of $\mathscr{H}$ except that we have switched the $\le$ in the formula into $\ge$. One can find out quickly that the set $\mathscr{H}'$ is exactly the whole space $\mathbb{R}_{\max}^2$, which is of no interest and has nothing to do with the set $\mathscr{H}^c$. Now let's follow the method in the proof of lemma \ref{H^q finitely generated} and search a generating set for $\overline{\mathscr{H}^c}$. Now the set $V_0 = \{(0,\varepsilon),(\varepsilon,0)\}$, let $a = (1,0)$, $b=(0,0)$, $v = (0,\varepsilon)$, $w = (\varepsilon, 0)$. One verifies quickly that \(V_\infty = \{(0,\varepsilon)\}\cup \{(b|w).v\oplus (a|v).w\} = \{(0,\varepsilon),(0,1)\}\), and $\mathrm{Span}(V_\infty) = \{(x_1,x_2)\in \mathbb{R}_{\max}^2\mid x_1 +1 \ge x_2\}$, which is the closure of $\mathscr{H}^c = \{x\in \mathbb{R}^2_{\max}\mid x_1+1 > x_2\}$.
\end{myexm}

Now for a set $E$ of the form $E = \mathscr{H}_1 \cap \dots \cap \mathscr{H}_{k_1} \cap \mathscr{H}_{k_1+1}^c\dots \mathscr{H}_{k_1+k_2}^c$ (where the $\mathscr{H}_i$s denote tropical half-spaces), we proceed the following method to compute the $\mathcal{V}$-form of a tropical cone that approximates the set $E$:
\begin{enumerate}
    \item We first compute a generating set $V$ of the tropical cone $\mathscr{C}_0 =  \mathscr{H}_1 \cap \dots \cap \mathscr{H}_{k_1}$ with the method described in \cite{allamigeon2013computing}. We let $J$ be the support of $\mathscr{C}_0$
    \item For $i$ ranging from $1$ to $k_2$, compute the set $V_\infty$ for the tropical cone $\mathscr{C}_i$ and the half-space $\mathscr{H}_{k_1 + i}$. If $v_\infty = \bigoplus_{v \in V_\infty}v$ is not of full support, i.e. $\mathrm{supp(v_\infty)} \subsetneq J$, we redo the calculation by restricting $\mathscr{C}$ and $\mathscr{H}^c$ on the support of $v_\infty$ until we finally get some $v_\infty$ of full support. We update $J$ by the support of $\mathrm{Span}(V_\infty) = \overline{\mathscr{C}_i\cap \mathscr{H}^c_{k_1+i}}$ and $V$ by $V_\infty$.
    \item Return the final $V$ which is a generating set for the restriction of $\mathscr{C}_{k_2}$ on the support $J$, the generating set for $\mathscr{C}_{k_2}$ is obtained by padding all vectors in $V$ into vectors of $\mathcal{D}^n$ with $\varepsilon$.
\end{enumerate}
Remark that the ``while" process in step 2 of the algorithm will terminate since the support $J$ is strictly decreasing. Further more, as we always have $\mathrm{Span}(V_\infty) \supseteq \overline{\mathscr{C}_i\cap \mathscr{H}^c_{k_1+i}}$, the support of $\mathscr{H}^c_{k_1+i}$ is always contained in the variable $J$ during the iteration, which will finally be $J$ in a finite number of steps. We have been cutting off the dimension of the system until that the condition required by theorem \ref{H^q finitely generated} is verified.

As $\mathscr{H}^c_{k_1+i}$ is a union of open DBMs, by lemma \ref{closure of DBM2}, we can prove by induction that $\overline{E} = \mathrm{Span}(V)$.

\section{The reachability analysis for MPLS with approximation}\label{sec: reachability}
In this section we focus on the calculation of reachable set $E$ as described in the section \ref{sec: Basic ideas}. We first show how to calculate the closure of $\Upsilon_{\mathcal{U}}^{A,B}(E)$ as its approximation. For the set $\left(\Upsilon_{\mathcal{U}}^{A,B}\right)^N(E)$, we remark that it is just the reachable set in one step of the system 
\[x_k = A' \otimes x_{k-1}\oplus B' \otimes u_k'\]
where $A' = A^{\otimes N}$ and $B' = [B|A^{\otimes N}\otimes B| A^{\otimes (N-1)} \otimes B|\dots|A\otimes B|B]$. 

Remark that $S$ can be rewritten into the union of finitely many set of the form $\mathscr{H}_1\cap \dots \cap \mathscr{H}_{k_1} \cap \mathscr{H}_{k_1+1}^c \dots \cap \mathscr{H}_{k_1+k_2}^c$, where each $\mathscr{H}_i$ is a tropical affine half-space. The latter can be rewritten into the form $E = \{x \in \mathcal{D}^n\mid C_1\otimes (0,x)  \preceq D_1 \otimes (0,x),\ C_2\otimes (0,x)  \prec D_2 \otimes (0,x) \oplus f_2\}$. The set $\Upsilon^{A,B}_\mathcal{U}(E)$ can be written as 
\[\Upsilon^{A,B}_\mathcal{U}(E) = \left\{x \in \mathcal{D}^n\left| 
\begin{aligned}
    &\exists u \in \mathcal{U},\ z = A\otimes x \oplus B \otimes u\\
    &C_1\otimes (0,z) \preceq D_1 \otimes (0,z)\\
    &C_2 \otimes (0,z) \prec D_2 \otimes (0,z)
\end{aligned}\right\}\right.\]
There exists matrices $F,G$ such that $\mathcal{U} = \{u \in \mathcal{D}^m\mid F \otimes (0,u) \preceq G \otimes (0,u)\}$. By reformulating the expressions, we can find matrices $K_1,K_2$ and $L_1,L_2$ such that 
$$\Upsilon_\mathcal{U}^{A,B}(E) = 
\left\{ x \in \mathcal{D}^n \left|\begin{aligned} &\exists u \in \mathcal{D}^n \\
& K_1 \otimes (0,x,u) \preceq L_1\otimes (0,x,u)\\
& K_2 \otimes (0,x,u) \prec L_2 \otimes (0,x,u)
\end{aligned}\right\}\right.$$
explicitly, $K_1$, $L_1$ are given by
\[K_1 = \left[
\begin{array}{c|c|c}
    C_1^{(1)} & C_1^{(\ge 2)} \otimes A & C_1^{(\ge 2)}\otimes B \\ \hline
    F^{(1)} & \mathcal{E}  & F^{(\ge 2)}
\end{array}
\right]\]
\[ L_1 = \left[
\begin{array}{c|c|c}
    D_1^{(1)} & D_1^{(\ge 2)} \otimes A & D^{(\ge 2)} \otimes B \\ \hline
    G^{(1)} & \mathcal{E} & F^{(\ge 2)}
\end{array}
\right]
\]
where the matrix $\mathcal{E}$ has all its coefficients equal to $\varepsilon$. The matrix $K_2$, $L_2$ are given by $K_2 = [C_2^{(1)},C_2^{(\ge 2)}\otimes A, C_2^{(\ge 2)} \otimes B]$, $K_2 = [D_2^{(1)},D_2^{(\ge 2)}\otimes A, D_2^{(\ge 2)} \otimes B]$.

From this point of view we can find out that $\Upsilon_\mathcal{U}^{A,B}(E)$ is in fact the projection on the first $n$-coordinates of the following set
\[\mathcal{P} = \left\{z \in \mathcal{D}^{n+m}\left| 
\begin{aligned}
    & K_1\otimes (0,z) \preceq L_1 \otimes (0,z) \\
    & K_2 \otimes (0,z) \prec L_2 \otimes (0,z)
\end{aligned}
\right\}\right.\]

 Let $\mathbb{P}_n^{n+m}$ be the projector of $\mathcal{D}^{n+m}$ on its first $n$-coordinates, then $\Upsilon_{\mathcal{U}}^{A,B}(E) = \mathbb{P}_n^{n+m}(\mathcal{P})$. $\mathcal{P}$ can be rewritten as $F \cap \mathcal{D}^{n+m}$ with $F\subseteq \mathcal{D}^{n+m+1}$ of the form $F = \mathscr{H}_1\cap \dots \cap \mathscr{H}_{k_1} \cap \mathscr{H}_{k_1+1}^c\dots \mathscr{H}_{k_1+k_2}^c$. Then $\mathbb{P}_n^{n+m}(\mathcal{P}) = \mathbb{P}_n^{n+m}(F\cap \mathcal{D}^{n+m})= \mathbb{P}_{n+1}^{n+m+1}(F)\cap \mathcal{D}^n$. 
 
 We aim to find an approximation for the set $\mathbb{P}_n^{n+m}(\mathcal{P})$. 
 As described in section \ref{sec: Basic ideas}, we would like to calculate the closure of $\mathbb{P}_{n+1}^{n+m+1}(F)$ by the alternative method of calculating firstly the closure $\overline{F}$ of $F$ then project $\overline{F}$ onto $\mathcal{D}^{n+1}$. This is because $\mathbb{P}_{n+1}^{n+m+1}(\overline{F})$ is exactly the closure of $\mathbb{P}_{n+1}^{n+m+1}$. Indeed we always have $\mathbb{P}_{n+1}^{n+m+1}(\overline{F}) \supseteq \overline{\mathbb{P}_{n+1}^{n+m+1}(F)}$, because $\mathbb{P}_{n+1}^{n+m+1}(\overline{F})$ is a tropical cone (thus closed) containing $\mathbb{P}_{n+1}^{n+m+1}(F)$. And for $x\in \mathbb{P}_{n+1}^{n+m+1}(\overline{F})$, there exists $u \in \mathcal{D}^m$ and a sequence $(x_k,u_k)_k \in F$ such that $(x_k,u_k)\xrightarrow{k\to +\infty}(x,u)$. As $x_k \in \mathbb{P}_{n+1}^{n+m+1}(F)$, we have $x \in \overline{\mathbb{P}_{n}^{n+m}(F)}$. With this reasoning, we have $\overline{\mathbb{P}_n^{n+m}(\mathcal{P})} = \overline{\mathbb{P}_{n+1}^{n+m+1}(F)\cap \mathcal{D}^n} = \mathbb{P}_{n+1}^{n+m+1}(\overline{F})\cap \mathcal{D}^n = \mathbb{P}_n^{n+m}(\overline{\mathcal{P}})$. The following diagram illustrates this fact more clearly.
 \[
 \begin{tikzcd}
 F\arrow[r,"\mathrm{closure}"] \arrow[d,"\mathbb{P}^{n+m}_n"] & \overline{F} \arrow[d,"\mathbb{P}^{n+m}_n"] \\
 \mathbb{P}^{n+m}_n(F)\arrow[r,"\mathrm{closure}"] & \mathbb{P}^{n+m}_n(\overline{F}) 
\end{tikzcd}
\]
The $\mathcal{V}$-form of the set $\overline{\mathbb{P}^{n+m}_n(\mathcal{P})}$ can thus be obtained in the following steps: 1. Get the $\mathcal{V}$-form of $\overline{\mathcal{P}} = \overline{F}\cap \mathcal{D}^{m+n}$ through the method described in section \ref{sec: approximation method}. 2. For every generating vector of $\overline{\mathcal{P}}$, project it onto the first $n$ coordinates, the collection of these vectors is then a $\mathcal{V}$-form of $\overline{\mathbb{P}^{n+m}_n(\mathcal{P})}$ which is an approximation of $\overline{\mathbb{P}^{n+m}_n(\mathcal{P})}$. 

\begin{remark}
    A remarkable fact is that for the case $E = \mathscr{H}_1\cap \dots \cap \mathscr{H}_{k_1}$ (meaning that $k_2 = 0$), then we have $\Upsilon_\mathcal{U}^{A,B}(E) = \overline{\Upsilon_\mathcal{U}^{A,B}(E)}$, there is no approximation. The control strategy is stored in the $\mathcal{V}$-form of $\mathcal{P}$. Denote $(\{v_i\}_i,\{e_j\}_j)$ the $\mathcal{V}$-form of $\mathcal{P}$, then (as described in \cite{4099499}) given $x \in \Upsilon_\mathcal{U}^{A,B}(S)$, by the residuation theory we can find $(\lambda_i)_i,(\mu_j)_j$ such that $\bigoplus_j\mu_j = e$ and $x = \bigoplus_i\lambda_i\cdot v_i^{(\le n)} \oplus \bigoplus_j\mu_j\cdot e_j^{(\le n)}$. A possible control $u$ is then given by $u = \bigoplus_i\lambda_i\cdot v_i^{(\ge n+1)} \oplus \bigoplus_j\mu_j\cdot e_j^{(\ge n+1)}$. However, for the case $E$ is not a closed set, i.e. $k_2 \neq 0$, the control strategy $u$ obtained by the residuation theory can not garantee that $A\otimes x \oplus B \otimes u \in E$.
\end{remark}

\section{Improvement of the algorithm}\label{sec: Improvement}
 A point $x$ in the tropical cone $\mathscr{C}$ is said to extremal if $\forall a,b \in\mathscr{C}$, $x = a \oplus b \Rightarrow x = a$ or $x=b$. Every tropical cones is generated by its scaled extremal points, when calculating the $\mathcal{V}$-form of $\mathscr{C}$, we simply need to keep the extremal points in the representation, which can reduce the complexity to a large extent. \cite{allamigeon2013computing} have provided an efficient way of deciding whether a point $v \in \mathscr{C} = \langle C,D\rangle$ ($C,D \in \mathcal{D}^{p\times n}$) is a extremal point by considering the tangent cone at $v$. Here we give the definition of the tropical version of tangent cone:
 \begin{mydef}
     Given a subsemimodule $\mathscr{D}\subseteq \mathcal{D}^n$, and $v \in \mathscr{D}$ a vector, then the tangent cone $\mathscr{T}(v,\mathscr{D})$ is a subsemimodule in $\mathcal{D}^n$ containing $0$ and stable under the usual multiplication by positive scalars and is s.t. there exists a neighborhood $\mathcal{N}$ of $v$, $\mathcal{N} \cap \mathscr{D} = \mathcal{N} \cap \mathscr{T}(v,\mathscr{D})$.
 \end{mydef}
Remark that this definition does not depend on the choice of $\mathcal{N}$ and the tangent cone at a point is unique. For the tropical cone $\mathscr{C}$ and $v \in \mathscr{C}$ we have $\mathscr{T}(v,\mathscr{C}):= $
\[\bigcap_{
     r,\ (C_r| v) = (D_r | v) > \varepsilon
 }\left\{ x \!\in \!\mathcal{D}^n \left|\bigoplus_{i \in \mathrm{argmax}(C_r|v)}\!\!\!\!\!\!\!x_i\le\!\!\!\!  \bigoplus_{j \in \mathrm{argmax}(D_r|v)}\!\!\!\!\!\!x_j \right\}\right.\]
The set $\mathrm{argmax}(c | v)$ is defined as $\{1\le i \le n\mid c_i\otimes v_i = (c|v)\}$. Whether $v \in \mathscr{C}$ is an extremal point is totally decide by $\mathscr{T}(v,\mathscr{C})$. In fact, $v$ is extremal iff $0$ is extremal in $\mathscr{T}(v,\mathscr{C})$. Let $\mathcal{G}(v,\mathscr{C})$ be the hypergraph associated to $\mathscr{T}(v,\mathscr{C})$ as defined in \cite{allamigeon2013computing}, then $v \in \mathscr{C}$ is an extremal point iff $\mathcal{G}(v,\mathscr{C})$ has a maximal strongly connected componenet.

As showed in \cite{allamigeon2013computing}, we can suppose for simplicity that $v$ is of full support.

We wish to decide for $v \in \overline{\mathscr{C}\cap \mathscr{H}^c}$, whether it is an extremal point. We distinguish two cases: 1. The vector $v$ is contained in $\mathscr{H}^c$, i.e. $(b|v) < (a|v)$. 2. The vector $v$ is s.t. $(a|v) = (b|v)$. For the first case, $v$ is an extremal point in $\overline{\mathscr{C}\cap \mathscr{H}^c}$ iff $v$ is an extremal point in $\mathscr{C}$. In fact, let $\mathscr{T}(v,\mathscr{C})$ be the tangent cone of $\mathscr{C}$ at $v$, and define $\mathcal{N}$ as the open neighbourhood of $v$ formed by $x \in \mathcal{D}^n$ verifying the followings:
\begin{enumerate}
    \item $(C_r|x) < (D_r|x)$ for all $r \in [p]$ such that $(C_r|v) < (D_r|v)$.
    \item $\mathrm{argmax}(C_r|x)\subseteq \mathrm{argmax}(C_r|v)$ and $\mathrm{argmax}(D_r|x)\subseteq \mathrm{argmax}(D_r|v)$.
    \item $(b|x) < (a|x)$.
\end{enumerate}
then it can be verfied that for $x \in \mathcal{N}$, $x$ belongs to $\mathscr{C}\cap \mathscr{H}^c$ iff $x \in v + \mathscr{T}(v,\mathscr{C})$. Since $v + \mathscr{T}(v,\mathscr{C})$ is a closed set, by taking the closure we have $ \overline{\mathscr{C}\cap \mathscr{H}^c} \cap \mathcal{N} = (v + \mathscr{T}(v,\mathscr{C}))\cap \mathcal{N}$. This means that the tangent cone of $\overline{\mathscr{C}\cap \mathscr{H}^c}$ at $v$ is exactly $\mathscr{T}(v,\mathscr{C})$. That $v$ is extremal or not in $\overline{\mathscr{C}\cap \mathscr{H}^c}$ is totally the same as for $v$ in $\mathscr{C}$.

The difficulty lies in the second case. To the author's knowledge, we can only decide quickly only a part of vector that is extremal, for the rest we will have to use the brutal force to decide wether they are extremal points or not. Suppose now $v \in \overline{\mathscr{C}\cap \mathscr{H}^c}$ is one of the vector of $V_\infty$ provided by theorem \ref{H^q finitely generated}. We wish to determine whether $v$ is extremal in $\overline{\mathscr{C}\cap \mathscr{H}^c}$. The tangent cone of $\mathscr{C}\cap \mathscr{H}^c$ at $v$ have the following expression:
\[\mathscr{T}(v,\mathscr{C}\cap \mathscr{H}^c) = \mathscr{T}(v,\mathscr{C})\cap \mathscr{T}(v,\mathscr{H}^c)\]
and $ \mathscr{T}(v,\mathscr{H}^c) = \{x \in\mathcal{D}^n\mid \bigoplus_{i \in \mathrm{argmax}(b|v)}x_i < \bigoplus_{j \in \mathrm{argmax}(a|v)}x_j\}$. To prove this, first notice that the tangent cone of the intersection of two semimodules at a point $v$ is the intersection of the tangent cones of each semimodule at $v$, it suffice to calculate the tangent cone $\mathscr{T}(v,\mathscr{H}^c)$. Consider the open set $\mathcal{N}$ whose elements are $x \in \mathcal{D}^n$ such that $\mathrm{argmax}(b|cx) \subseteq \mathrm{argmax}(b|v)$ and $\mathrm{argmax}(a|x) \subseteq \mathrm{argmax}(a|v)$, it can be verified that $\mathscr{H}^c \cap \mathcal{N} = \mathscr{H}^c \cap \{x \in\mathcal{D}^n\mid \bigoplus_{i \in \mathrm{argmax}(b|v)}x_i < \bigoplus_{j \in \mathrm{argmax}(a|v)}x_j\}$. However it is not easy to decide the form of the tangent cone $\mathscr{T}\left(v,\overline{\mathscr{C}\cap \mathscr{H}^c}\right)=\overline{\mathscr{T}(v,{\mathscr{C}\cap \mathscr{H}^c})}$, to deal with this, we will enlarge the set a little, with the risk of loosing some authenticity. Here we describe our method: Note that the set $\mathscr{T}(v,\mathscr{H}^c) = \{x \in\mathcal{D}^n\mid \bigoplus_{i \in I}x_i < \bigoplus_{j \in J}x_j\}$ (where $I = \mathrm{argmax}(b|v)$, $J = \mathrm{argmax}(a|v)$) can be rewritten as $\mathscr{T}(v,\mathscr{H}^c) = \{x \in\mathcal{D}^n\mid \bigoplus_{i \in J}x_i < \bigoplus_{j \in J \setminus I}x_j\}$, whose closure can be written as $\overline{\mathscr{T}(v,\mathscr{H}^c)} = \{x \in\mathcal{D}^n\mid \bigoplus_{i \in I}x_i \le \bigoplus_{j \in J\setminus I}x_j\}$. As we have $\overline{\mathscr{T}(v,\mathscr{C}\cap \mathscr{H}^c)} \subseteq \mathscr{T}(v,\mathscr{C})\cap \overline{\mathscr{T}(v,\mathscr{H}^c)}$, where the latter is a closed tangent cone of $v$. If $0$ is extremal in $\mathscr{T}(v,\mathscr{C})\cap \overline{\mathscr{T}(v,\mathscr{H}^c)}$, it can not be written as sum of two vectors different from $0$, thus $0$ is also extremal in $\overline{\mathscr{T}(v,\mathscr{C}\cap \mathscr{H}^c)}$, meaning that $v$ is extremal in the closure of $\mathscr{C}\cap \mathscr{H}^c$. We have thus designed an algorithm that can recognize part of the vectors $v \in V$ that is an extremal point. For the rest of them, we can only use the brutal force to decide whether it is an extremal point.

\section{Conclusion}\label{sec:conclu}
In this work, we addressed the challenging problem of computing backward reachable sets for max-plus linear systems from general non-convex target sets. We proposed a new framework that approximates such sets using tropical polyhedra, leveraging their algebraic and geometric structure to reduce computational complexity. By reformulating the problem in terms of set operations and closure approximations, we avoided the exponential complexity of traditional DBM-based methods. Furthermore, we developed techniques for generating inner and outer representations of tropical polyhedra, and applied extremal point filtering to improve efficiency. Our approach supports a broader class of target sets beyond traditional polyhedral constraints and offers a practical tool for safety verification and control synthesis in max-plus systems. Future work may explore tighter approximation bounds and integration with controller design frameworks.

\bibliographystyle{ieeetr}
\bibliography{cite.bib}

@article{de2020analysis,
  title={Analysis and control of max-plus linear discrete-event systems: An introduction},
  author={De Schutter, Bart and van den Boom, Ton and Xu, Jia and Farahani, Samira S},
  journal={Discrete Event Dynamic Systems},
  volume={30},
  pages={25--54},
  year={2020},
  publisher={Springer}
}

@article{kersbergen2016towards,
  title={Towards railway traffic management using switching Max-plus-linear systems: Structure analysis and rescheduling},
  author={Kersbergen, Bart and Rudan, J{\'a}nos and van den Boom, Ton and De Schutter, Bart},
  journal={Discrete Event Dynamic Systems},
  volume={26},
  pages={183--223},
  year={2016},
  publisher={Springer}
}

@article{van2020model,
  title={Model predictive scheduling of semi-cyclic discrete-event systems using switching max-plus linear models and dynamic graphs},
  author={van den Boom, Ton JJ and Muijsenberg, Marenne van den and De Schutter, Bart},
  journal={Discrete Event Dynamic Systems},
  volume={30},
  pages={635--669},
  year={2020},
  publisher={Springer}
}

@article{outafraout2020control,
  title={A control approach based on colored hybrid petri nets and (max,+) algebra: Application to multimodal transportation systems},
  author={Outafraout, K. and Nait-Sidi-Moh, A.},
  journal={IEEE Transactions on Automation Science and Engineering},
  volume={17},
  number={3},
  pages={1208--1220},
  year={2020},
  publisher={IEEE}
}

@article{he2021performance,
  title={Performance safety enforcement in strongly connected timed event graphs},
  author={He, Zhou and Ma, Ziyue and Tang, Wei},
  journal={Automatica},
  volume={128},
  pages={109605},
  year={2021},
  publisher={Elsevier}
}

@article{amari2011max,
  title={Max-plus control design for temporal constraints meeting in timed event graphs},
  author={Amari, Sa{\"\i}d and Demongodin, Isabel and Loiseau, Jean Jacques and Martinez, Claude},
  journal={IEEE Transactions on Automatic Control},
  volume={57},
  number={2},
  pages={462--467},
  year={2011},
  publisher={IEEE}
}

@article{he2016cycle,
  title={Cycle time optimization of deterministic timed weighted marked graphs by transformation},
  author={He, Zhou and Li, Zhiwu and Giua, Alessandro},
  journal={IEEE Transactions on Control Systems Technology},
  volume={25},
  number={4},
  pages={1318--1330},
  year={2016},
  publisher={IEEE}
}

@article{soudjani2016formal,
  title={Formal verification of stochastic max-plus-linear systems},
  author={Soudjani, Sadegh Esmaeil Zadeh and Adzkiya, Dieky and Abate, Alessandro},
  journal={IEEE Transactions on Automatic Control},
  volume={61},
  number={10},
  pages={2861--2876},
  year={2016},
  publisher={IEEE}
}

@article{xu2018model,
  title={Model predictive control for stochastic max-plus linear systems with chance constraints},
  author={Xu, Jia and van den Boom, Ton and De Schutter, Bart},
  journal={IEEE Transactions on Automatic Control},
  volume={64},
  number={1},
  pages={337--342},
  year={2018},
  publisher={IEEE}
}

@article{xu2016optimistic,
  title={Optimistic optimization for model predictive control of max-plus linear systems},
  author={Xu, Jia and van den Boom, Ton and De Schutter, Bart},
  journal={Automatica},
  volume={74},
  pages={16--22},
  year={2016},
  publisher={Elsevier}
}

@article{allamigeon2013computing,
  title={Computing the vertices of tropical polyhedra using directed hypergraphs},
  author={Allamigeon, Xavier and Gaubert, St{\'e}phane and Goubault, {\'E}ric},
  journal={Discrete \& Computational Geometry},
  volume={49},
  pages={247--279},
  year={2013},
  publisher={Springer}
}

@article{4099499,
  author={Katz, Ricardo David},
  journal={IEEE Transactions on Automatic Control}, 
  title={Max-Plus $(A,B)$-Invariant Spaces and Control of Timed Discrete-Event Systems}, 
  year={2007},
  volume={52},
  number={2},
  pages={229-241},
  keywords={Control systems;Discrete event systems;Transportation;Linear systems;Algebra;Sufficient conditions;State feedback;Solid modeling;Mathematical model;Discrete-event systems (DESs);geometric control;invariant spaces;max-plus algebra},
  doi={10.1109/TAC.2006.890478}}

@inproceedings{adzkiya2014forward,
  title={Forward reachability computation for autonomous max-plus-linear systems},
  author={Adzkiya, Dieky and De Schutter, Bart and Abate, Alessandro},
  booktitle={International Conference on Tools and Algorithms for the Construction and Analysis of Systems},
  pages={248--262},
  year={2014},
  organization={Springer}
}

@article{mufid2021smt,
  title={SMT-based reachability analysis of high dimensional interval max-plus linear systems},
  author={Mufid, Muhammad Syifa'ul and Adzkiya, Dieky and Abate, Alessandro},
  journal={IEEE Transactions on Automatic Control},
  volume={67},
  number={6},
  pages={2700--2714},
  year={2021},
  publisher={IEEE}
}

@article{adzkiya2013finite,
  title={Finite abstractions of max-plus-linear systems},
  author={Adzkiya, Dieky and De Schutter, Bart and Abate, Alessandro},
  journal={IEEE Transactions on Automatic Control},
  volume={58},
  number={12},
  pages={3039--3053},
  year={2013},
  publisher={IEEE}
}

@article{espindola2025set,
  title={On the set-estimation of uncertain Max-Plus Linear systems},
  author={Espindola-Winck, Guilherme and Hardouin, Laurent and Lhommeau, Mehdi},
  journal={Automatica},
  volume={171},
  pages={111899},
  year={2025},
  publisher={Elsevier}
}

@book{butkovivc2010max,
  title={Max-linear systems: theory and algorithms},
  author={Butkovi{\v{c}}, Peter},
  year={2010},
  publisher={Springer Science \& Business Media}
}

@article{gaubert2011minimal,
  title={Minimal half-spaces and external representation of tropical polyhedra},
  author={Gaubert, St{\'e}phane and Katz, R.D.},
  journal={Journal of Algebraic Combinatorics},
  volume={33},
  number={3},
  pages={325--348},
  year={2011},
  publisher={Springer}
}

@article{kistosil2018generalized,
  title={Generalized public transportation scheduling using max-plus algebra},
  author={Kistosil, F. and Adzkiya, D.},
  journal={Kybernetika},
  volume={54},
  number={2},
  pages={243--267},
  year={2018},
  publisher={Institute of Information Theory and Automation AS CR}
}

@inproceedings{hardouin2011towards,
  title={Towards geometric control of max-plus linear systems with applications to manufacturing systems},
  author={Hardouin, L.t and Lhommeau, M. and Shang, Y.},
  booktitle={IEEE Conference on Decision and Control},
  pages={1149--1154},
  year={2011},
  organization={IEEE}
}

@inproceedings{imaev2008hierarchial,
  title={Hierarchial modeling of manufacturing systems using max-plus algebra},
  author={Imaev, Aleksey and Judd, Robert P},
  booktitle={American Control Conference},
  pages={471--476},
  year={2008},
  organization={IEEE}
}

@article{hardouin2018control,
  title={Control and state estimation for max-plus linear systems},
  author={Hardouin, Laurent and Cottenceau, Bertrand and Shang, Ying and Raisch, J{\"o}rg and others},
  journal={Foundations and Trends{\textregistered} in Systems and Control},
  volume={6},
  number={1},
  pages={1--116},
  year={2018},
  publisher={Now Publishers, Inc.}
}

@article{pratama2018safety,
  title={Safety Verification of Uncertain Max-Plus-Linear Systems},
  author={Pratama, Aditya Putra and Subchan, Subchan and Adzkiya, Dieky},
  journal={International Journal of Computing Science and Applied Mathematics},
  volume={4},
  number={2},
  pages={52--55},
  year={2018}
}

@article{espindola2022stochastic,
  title={Stochastic filtering scheme of implicit forms of uncertain max-plus linear systems},
  author={Espindola-Winck, Guilherme and Hardouin, Laurent and Lhommeau, Mehdi and Santos-Mendes, Rafael},
  journal={IEEE Transactions on Automatic Control},
  volume={67},
  number={8},
  pages={4370--4376},
  year={2022},
  publisher={IEEE}
}
\end{document}